\title{A Lower Bound for Nonadaptive, One-Sided Error Testing of Unateness of Boolean Functions over the Hypercube}
\author{Roksana Baleshzar\footnote{Department of Computer Science and Engineering, Pennsylvania State University. {\sf rxb5410@cse.psu.edu, ramesh@psu.edu, sofya@cse.psu.edu}. Partially supported by NSF award CCF-1422975. }
\and 
Deeparnab Chakrabarty\footnote{Department of Computer Science, Dartmouth College. {\sf deeparnab@dartmouth.edu}. Work done while at Microsoft Research, India.} 
\and 
Ramesh Krishnan S. Pallavoor\footnotemark[1] 
\and 
Sofya Raskhodnikova\footnotemark[1] 
\and 
C. Seshadhri\footnote{Department of Computer Science, University of California, Santa Cruz. {\sf sesh@ucsc.edu.}}
}
\date{}
\newtheorem{theorem}{Theorem}[section]
\newtheorem{claim}[theorem]{Claim}
\newtheorem{observation}[theorem]{Observation}
\newcommand{\eps}{\varepsilon}
\newcommand{\ord}[2][th]{\ensuremath{{#2}^{\mathrm{#1}}}}
\newcommand{\cD}{\mathcal{D}}
\begin{document}

\maketitle

\begin{abstract}
A Boolean function $f:\{0,1\}^d \mapsto \{0,1\}$ is unate if, along each coordinate, the function is either nondecreasing or nonincreasing. In this note, we prove that any nonadaptive, one-sided error unateness tester must make $\Omega(\frac{d}{\log d})$ queries. This result improves upon the $\Omega(\frac{d}{\log^2 d})$ lower bound for the same class of testers due to Chen et al.~(STOC, 2017).
\end{abstract}

\section{Introduction}
We study the problem of deciding whether a Boolean function $f:\{0,1\}^d \mapsto \{0,1\}$ is \emph{unate} in the property testing
 model~\cite{RS96,GGR98}. A function is unate if, for each dimension $i \in [d]$, the function is either nondecreasing along the $\ord{i}$ coordinate or nonincreasing along the $\ord{i}$ coordinate.
%
%
A property tester for unateness is a randomized algorithm that takes as input
a proximity parameter $\eps \in (0,1)$ and has query access to a function $f$.
If $f$ is unate, it must accept with probability at least $2/3$. If $f$ is $\eps$-far from
unate, it must reject with probability at least $2/3$.
A tester has \emph{one-sided error} if it always accepts unate functions. 
A tester is \emph{nonadaptive} if it chooses all of its queries in advance; it is \emph{adaptive} otherwise.

%
%

The problem of testing unateness was introduced by Goldreich et al.~\cite{GGLRS00}. 
Following a result of Khot and Shinkar~\cite{KS16}, Baleshzar et al.~\cite{BCPRS17} settled the complexity
of unateness testing for \emph{real-valued functions}. 
Unateness can be tested with $O(\frac{d}{\eps})$ queries adaptively and with $O(\frac{d \log d}{\eps})$ queries nonadaptively.
For constant $\eps$, these complexities are optimal.

On the other hand, for the Boolean range, the complexity is far from settled.
Baleshzar et al.~\cite{BMPR16} proved that $\Omega(\sqrt{d})$ queries are necessary for nonadaptive, one-sided error testers. Chen et al.~\cite{CWX17} improved the lower bound for this class of testers to $\Omega(\frac{d}{\log^2 d})$.
They also proved a lower bound of $\Omega(\frac{\sqrt{d}}{\log^2 d})$ for adaptive, two-sided error unateness testers.

In this note, we use a construction similar to the one used by Chen et al.~\cite{CWX17} to get an $\Omega(\frac{d}{\log d})$ for nonadaptive, one-sided error unateness testers of Boolean functions over the hypercube. 
Our analysis of the lower bound construction is simpler and gives a better dependence on $d$.
There is still a gap of $\log^2d$ between the query complexity of the best known algorithm for this problem (from~\cite{BCPRS17}) and our lower bound.

\section{The Lower Bound}
In this section, we prove the following theorem.
\begin{theorem}\label{thm:main-lb}
Any nonadaptive, one-sided error unateness tester for functions $f:\{0,1\}^d \mapsto \{0,1\}$ with the distance parameter $\eps \leq \frac{1}{8}$ must make $\Omega(\frac{d}{\log d})$ queries.
\end{theorem}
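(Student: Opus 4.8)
The plan is to use Yao's minimax principle, exploiting the one-sided-error constraint to reduce the theorem to a purely combinatorial statement about a single hard distribution. A one-sided-error tester never rejects a unate function, so on \emph{any} input it rejects only when the answers it has seen already contain a certificate of non-unateness, and for unateness the only certificates are two ``anti-aligned'' edges in a common direction: a coordinate $i\in[d]$ together with edges $\{a,a+e_i\}$ and $\{b,b+e_i\}$, all four endpoints queried, such that $f(a)<f(a+e_i)$ and $f(b)>f(b+e_i)$. Hence it suffices to exhibit a distribution $\cD$ over Boolean functions on $\{0,1\}^d$ such that (a) with probability at least $0.9$, $f\sim\cD$ is $\tfrac18$-far from unate, and (b) for every fixed $Q\subseteq\{0,1\}^d$ with $|Q|=q=o(d/\log d)$, the probability over $f\sim\cD$ that $Q$ contains an anti-aligned pair for $f$ is less than $\tfrac13$. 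A standard averaging argument (Yao plus Markov) over the tester's internal randomness then turns (a) and (b) into the claimed lower bound, since any correct tester making $q$ queries would have to reject a $0.9$-fraction of $\cD$ with probability at least $2/3$.

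For $\cD$ I would take the construction of Chen et al.~\cite{CWX17}, or a streamlined variant: a set of $\Theta(\log d)$ randomly chosen coordinates encodes an ``address'' $\alpha$, and, conditioned on the address, $f$ acts on the corresponding subcube as a random dictator or anti-dictator $x_{\iota(\alpha)}\oplus\sigma(\alpha)$ on a single other coordinate, where the owner map $\iota$ and the flip bits $\sigma$ are drawn uniformly and independently across addresses. The two features I need are: (i) each non-address coordinate owns $\Theta(d)$ addresses whose flip bits are balanced up to an $O(\sqrt d)$ deviation, so that $f$ is robustly non-unate along every such coordinate; and (ii) for a \emph{fixed} pair of edges lying in a common direction, the probability over $f\sim\cD$ that they are anti-aligned is $O(1/d^2)$.

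To prove (a), fix any unate $g$ with orientation vector $\rho\in\{0,1\}^d$. Its restriction to the subcube of each address $\alpha$ is unate with the induced orientation, so there it differs from the dictator/anti-dictator $x_{\iota(\alpha)}\oplus\sigma(\alpha)$ on at least a $\tfrac12\cdot\mathbf 1[\sigma(\alpha)\ne\rho_{\iota(\alpha)}]$ fraction of points, because the distance from the function $y\mapsto y_i\oplus b$ to the set of functions unate with orientation $\rho_i\ne b$ is exactly $\tfrac12$ (the cheapest repair makes it constant). Averaging over $\alpha$, then minimizing over $\rho$ coordinate by coordinate, and using that the flip bits seen by any fixed coordinate form a near-balanced random string, yields $\dist(f,\mathrm{unate})\ge\tfrac14-o(1)\ge\tfrac18$ for all large $d$; the theorem is vacuous for $d=O(1)$.

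To prove (b), fix $Q$ with $|Q|=q$. By the edge-isoperimetric inequality of the Boolean hypercube (Harper), $Q$ spans at most $\tfrac12 q\log_2 q$ hypercube edges, so the number of unordered pairs of these edges lying in a common direction is at most $\tfrac12\big(\tfrac12 q\log_2 q\big)^2<\tfrac18\,q^2\log_2^2 q$. Applying feature (ii) and a union bound, the probability that $Q$ contains an anti-aligned pair is $O\!\big(q^2\log_2^2 q/d^2\big)$, which is below $\tfrac13$ once $q=o(d/\log d)$; together with (a) this finishes the reduction. I expect the main obstacle to be establishing feature (ii): one must show that exhibiting both an ascending and a descending edge in the same direction forces the hidden, random orientation pattern to differ across the two addresses involved, an event of probability only $O(1/d^2)$, and the delicate subcase is when the common direction is one of the address-encoding coordinates, which has to be arranged so as not to dominate the other directions. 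Granting (ii), the gain over \cite{CWX17} comes precisely from counting co-directional edge pairs through the edge-isoperimetric bound, $O(q^2\log^2 q)$, rather than through a cruder estimate.
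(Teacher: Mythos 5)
Your reduction rests on the claim that ``the only certificates [of non-unateness] are two anti-aligned edges in a common direction,'' and this claim is false. For monotonicity, the only certificate is a single violated comparable pair, but unateness allows an arbitrary hidden orientation $\rho\in\{0,1\}^d$, so a query set is a certificate precisely when \emph{no} orientation is consistent with the observed values. That consistency question is an unsatisfiable CNF (one clause per pair $x,y\in Q$ with $f(x)=1,f(y)=0$, the clause asserting that some coordinate in $x\Delta y$ carries the right orientation), and unsatisfiability can arise without any edges at all in $Q$. Concretely, take eight points in $\{0,1\}^6$:
$x_1=111000,\ y_1=001000,\ x_2=100100,\ y_2=010100,\ x_3=010010,\ y_3=100010,\ x_4=000001,\ y_4=110001,$
with $f(x_j)=1$ and $f(y_j)=0$ for all $j$. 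Every pairwise Hamming distance is at least $2$, so $Q$ spans zero hypercube edges, yet the four within-pair clauses are exactly $z_1\vee z_2$, $z_1\vee\overline{z_2}$, $\overline{z_1}\vee z_2$, $\overline{z_1}\vee\overline{z_2}$, which is unsatisfiable. So this $f_{|Q}$ is a perfectly valid certificate of non-unateness that your tester model is not allowed to see, and consequently bounding the probability of an anti-aligned edge pair does not bound the rejection probability of an optimal one-sided tester. Your use of Harper's inequality to count co-directional edge pairs is therefore aimed at the wrong quantity, and the ``delicate subcase'' you flag in feature (ii) is a symptom of trying to force an edge-based argument where pairs at larger Hamming distance also matter.

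The paper avoids this pitfall by working directly with the CNF characterization: it shows that, under a much simpler hard distribution (pick three random coordinates $a,b,c$ and set $f(x)=x_a x_b+(1-x_a)x_c$), with probability $\ge 2/3$ every clause of $\phi(f_{|Q})$ has width at least $3\log d$, because the $\le\frac{d}{30\log d}$ queried points only ``capture'' a small set of coordinates and $\{a,b,c\}$ likely avoids it; a CNF of width $\ge 3\log d$ with $\le d^2$ clauses is then satisfiable by the probabilistic method. If you want to salvage your outline, you would need to replace ``anti-aligned edge pair'' with ``unsatisfiable orientation CNF'' throughout and re-derive a bound on the probability that the CNF induced by $Q$ is unsatisfiable under your (considerably more elaborate) Chen-et-al.-style distribution; at that point you would essentially be redoing the paper's argument with a harder distribution to analyze and a farness proof (your part (a)) that is currently only sketched.
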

\begin{proof}
We first define a hard distribution consisting of Boolean functions that are $\frac{1}{8}$-far from unate.
By Yao's minimax principle~\cite{Yao77}, it is sufficient to give a distribution on functions for which every deterministic tester fails with high probability.
A deterministic nonadaptive tester is determined by a set of query points $Q \subseteq \{0,1\}^d$.
We prove that if $|Q| \leq \frac{d}{30 \log d}$, then the tester fails with probability more than $2/3$ over the
hard distribution. 

The hard distribution $\cD$ is defined as follows: pick $3$ dimensions $a,b,c \in [d]$ uniformly at random and define $f_{a,b,c}(x) = x_a \cdot x_b + (1-x_a) \cdot x_c$. We call $a,b,c$ the {\em influential dimensions}, since the value of the function depends only on them. The coordinate $x_a$ determines if $f_{a,b,c}(x)$ should be set to $x_b$ or $x_c$. If $x_a = 1$, then $f_{a,b,c}(x) = x_b$, otherwise, $f_{a,b,c}(x) = x_c$.

There are $d \choose 3$ functions in the support of $\cD$.
The next claim states that all of them are far from unate.
\begin{claim}
Every function $f_{a,b,c}$ in the support of $\cD$ is $\frac{1}{8}$-far from unate.
\end{claim}
\begin{proof}
Consider an edge $(x,y)$ along the dimension $a$. We have $x_a = 0$ and $y_a = 1$, and $x_i = y_i$ for all $i \in [d] \setminus \{a\}$. 
By definition, $f_{a,b,c}(x) = x_c$ and $f_{a,b,c}(y) = y_b$.
If $x_b = y_b = 1$ and $x_c = y_c = 0$, then $f_{a,b,c}$ is increasing along the edge $(x,y)$. 
On the other hand, if $x_b = y_b = 0$ and $x_c = y_c = 1$, then $f_{a,b,c}$ is decreasing along $(x,y)$. Thus, with respect to $f_{a,b,c}$, at least $2^{d-3}$ edges along the dimension $a$ are decreasing and at least $2^{d-3}$ edges along the dimension $a$ are increasing. 
Hence, at least $2^{d-3}$ function values of $f_{a,b,c}$ need to be changed to make it unate. 
Consequently, $f_{a,b,c}$ is $\frac{1}{8}$-far from unate.
\end{proof}

\noindent Note that any one-sided error tester for unateness must accept if the query answers are consistent with a unate function.
Let $f_{|Q}$ denote the restriction of the function $f$ to the points in $Q$. 
We say that $f_{|Q}$ is {\em extendable} to a unate function if there exists a unate function $g$ such that $g_{|Q} = f_{|Q}$.
For $f \sim \cD$, we show that if $|Q| \leq \frac{d}{30 \log d}$, then, with high probability, $f_{|Q}$ is extendable to a unate function. Consequently, the tester accepts with high probability.

Next, we define a conjunctive normal form (CNF) formula $\phi(f_{|Q})$.
Intuitively, each pair $(x,y)$ of domain points on which $f$ differs imposes a constraint on $f$ (assuming that $f$ is unate).
Specifically, at least one of the dimensions on which $x$ and $y$ differ must be consistent (i.e., nondecreasing or nonincreasing) with the change of the function value between $x$ and $y$.
This constraint is formalized in the definition of $\phi(f_{|Q})$ as follows.
For each dimension $i$, we have a variable $z_i$ 
which is true if $f$ is nondecreasing along the dimension $i$, and false if it is nonincreasing along that dimension.
For each $x, y \in Q$ such that $f(x) = 1$ and $f(y) = 0$, create a clause (think of $x,y$ as sets where $i \in x$ iff $x_i = 1$)
$$ c_{x,y} = \bigvee_{i \in x\setminus y} z_i \vee \bigvee_{i \in y\setminus x} \overline{z_i} .$$
Set $\phi(f_{|Q}) = \bigwedge_{x,y \in Q: f(x) = 1, f(y) = 0} c_{x,y}$.

\begin{observation} 
The restriction $f_{|Q}$ is a certificate for non-unateness iff $\phi(f_{|Q})$ is unsatisfiable.
\end{observation}

Now we need to show that, with probability greater than $2/3$ over $f \sim \cD$, the CNF formula $\phi(f_{|Q})$ is satisfiable.
This follows from Claims~\ref{clm:width-of-cnf} and~\ref{clm:cnf-satisfiable}.

The width of a clause is the number of literals in it; the width of a CNF formula is the minimum width of a clause in it.

\begin{claim} \label{clm:width-of-cnf}
With probability at least $2/3$ over $f \sim \cD$, the width of
$\phi(f_{|Q})$ is at least $3\log d$.
\end{claim}

\begin{proof} 
Consider a graph $G$ with vertex set $Q$, and an edge between $x,y \in Q$
if $|x\Delta y| \leq 3\log d$ (Here, $x \Delta y$ is the symmetric difference between the sets $x$ and $y$). 
Take an arbitrary spanning forest $F$ of $G$.
Observe that for any edge $(u,v)$ of $G$, we have $u\Delta v \subseteq \bigcup_{(x,y) \in F} x\Delta y$.
Note that $F$ has at most $\frac{d}{30\log d}$ edges. 
Let $C = \bigcup_{(x,y) \in F} x \Delta y$, the set of dimensions captured by $Q$.
We have $|C| \leq \sum_{(x,y) \in F} |x\Delta y|
\leq \frac{d}{30 \log d} \cdot 3\log d \leq \frac{d}{10}$. 
Over the distribution $\cD$, the probability that at least one of the influential dimensions, $\{a,b,c\}$, is in $C$ is at most ${3}/{10}$ which is less than $1/3$.
Hence, with probability at least ${2}/{3}$,
no $(u,v) \in G$ contributes a clause to $\phi(f_{|Q})$. Therefore, the width
of $\phi(f_{|Q})$ is at least $3\log d$.
\end{proof} 

\begin{claim}  \label{clm:cnf-satisfiable}
Any CNF that has width at least $3\log d$ and at most $d^2$ clauses is satisfiable.
\end{claim}

\begin{proof} 
Apply the probabilistic method. A clause is not satisfied by a random assignment with
probability at most $1/d^3$.
Hence, the expected number of unsatisfied clauses is at most $\frac{d^2}{d^3} < 1$.
\end{proof}

\noindent
Thus, $f_{|Q}$ is a certificate for non-unateness with probability at most $1/3$ when $|Q| \leq \frac{d}{30 \log d}$, which completes the proof of Theorem~\ref{thm:main-lb}.
\end{proof}

\bibliography{references}

\end{document}